\documentclass[12pt]{article}

\usepackage[colorlinks,citecolor=blue,urlcolor=blue]{hyperref}

\usepackage{natbib}
\usepackage[english]{babel}
\usepackage{amsmath}
\usepackage{amssymb}
\usepackage{amsthm}
\usepackage{mathtools}
\usepackage{epsfig}
\usepackage{bm}
\usepackage{tikz}
\usepackage{subcaption}
\usepackage{soul}

\theoremstyle{plain}
\newtheorem{theorem}{Theorem}
\newtheorem{lemma}{Lemma}   
\newtheorem{corollary}{Corollary}

\oddsidemargin=0.in \evensidemargin=0.in \topmargin=0.0in
\headsep=0.1in \textwidth=6.5in \textheight=8.6in
\usepackage{xcolor}
\usepackage{tabularx}
\usepackage{multirow}
\usepackage{booktabs}
\usepackage{rotating}
\usepackage{caption}
\usepackage{floatrow}
\usepackage{graphicx} 
\usepackage{pgfplots}
\usepackage{authblk}
\usepackage{etoolbox}

\usepackage{soul}

\pgfplotsset{compat=1.5}



\theoremstyle{plain}


\title{Improved Risk Ratio Approximation by Complementary Log-Log Models: A Comparison with Logistic Models
\thanks{Yuji Tsubota is the corresponding author of this article. Email: \href{u922531f@ecs.osaka-u.ac.jp}{u922531f@ecs.osaka-u.ac.jp} }
}
\date{}

\author[1]{Yuji Tsubota} \affil[1]{Graduate School of Human Sciences, Osaka University, Japan} \author[2]{Kenji Beppu} \affil[2]{Graduate School of Engineering Science, Osaka University, Japan}

\begin{document}
\maketitle
\section*{Abstract}
Odds ratios obtained from logistic models fail to approximate risk ratios with common outcomes, leading to potential misinterpretations about exposure effects by practitioners.
This article investigates the complementary log-log models as a practical alternative to produce risk ratio approximation.
We demonstrate that the corresponding effect measure of complementary log-log models, called the complementary log ratio in this article, consistently provides a closer approximation to risk ratios than odds ratios.
To compare the approximation accuracy, we adopt the one-parameter Aranda-Ordaz family of link functions, which includes both the logit and complementary log-log link functions as special cases.
Within this unified framework, we implement a theoretical comparison of approximation accuracy between the complementary log ratio and the odds ratio, showing that the former always produces smaller approximation bias.
Simulation studies further reinforce our theoretical findings.
Given that the complementary log-log model is easily implemented in standard statistical software such as R and SAS, we encourage more frequent use of this model as a simple and effective alternative to logistic models when the goal is to approximate risk ratios more accurately.

\vspace{0.1cm}

\noindent{\bf keywords}: Odds Ratios, Complementary log-log model, Risk Ratio Approximation, Aranda-Ordaz family of link functions

\section{Introduction}
When the outcome of interest is dichotomous, odds ratios are frequently reported as a measure of exposure effects in cohort studies and randomized controlled trials \citep{zhang1998s, knol2011potential, vanderweele2020optimal}.
This widespread use of odds ratios comes mainly from the popularity of logistic regression analyses \citep{robbins2002s, penman2009complementary}.

However, the interpretation of odds ratios as a measure of exposure effect is not straightforward and is often misleading \citep{zhang1998s, robbins2002s, penman2009complementary}.
With rare outcomes, odds ratios closely approximate risk ratios, enabling a straightforward interpretation of exposure effects as risk ratios.
On the other hand, such an interpretation is no longer valid when the outcome is common \citep{vanderweele2020optimal}.

A substantial body of research has pointed out the problem of misinterpreting odds ratios as risk ratios in practice with common outcomes \citep[e.g.,][]{zhang1998s, altman1998odds, robbins2002s, knol2011potential, knol2012overestimation, vanderweele2020optimal}.
\cite{zhang1998s} visually illustrated that, when true risk ratios are greater (or less) than \(1\), the corresponding odds ratios always overestimate (or underestimate) the values of risk ratios.
Resulting deviations of odds ratios from risk ratios become significant when the outcome prevalence is greater than \(10\%\) \citep{zhang1998s, knol2012overestimation, hosmer2013applied}. 
Therefore, existing literature tends to use \(10\%\) as a cutoff outcome prevalence where odds ratios can be safely interpreted as risk ratios \citep{robbins2002s, hosmer2013applied}.

Several studies have suggested alternative approaches for estimating covariate-adjusted exposure effects on binary outcomes \citep{zhang1998s, zou2004modified, penman2009complementary, richardson2017modeling}.
However, each alternative approach has its own limitations and drawbacks, such as convergence issues, inability to accommodate interactions, or theoretical complexity. 
Moreover, among such studies, there has been limited investigation of other standard binary generalized linear models \citep[GLMs;][]{nelder1972generalized} except for log-binomial models that directly estimate risk ratios \citep{penman2009complementary}.

In this article, we investigate the potential utility of a less-utilized binary GLM, complementary log-log models \citep{fisher1922mathematical}, in approximating risk ratios. 
We compare odds ratios with the corresponding estimand of 
complementary log-log models, and show that the latter estimand is always a better approximation of risk ratios than odds ratios.
In our mathematical proof, we introduce the one-parameter family of link functions proposed by \cite{aranda1981two} that contains logit and complementary log-log link functions as special cases.
Following the framework by \cite{aranda1981two}, we can express odds ratios and the corresponding estimand of complementary log-log models in a unified way, thereby enhancing the clarity and brevity of our mathematical derivation.

The present article does not argue against the direct estimation of risk ratios. In fact, it is generally preferable when such estimation is stable and permits valid inference.  However, it can involve complex modeling or computational challenges \citep{williamson2013log}.
To address these difficulties, our goal is to suggest a more accessible and practical alternative based on standard GLMs.

The remainder of this paper is organized as follows. Section \ref{setup} introduces the Aranda-Ordaz transformation family, preparing essential theoretical groundwork for the mathematical analyses in later sections.
In Section \ref{Sec::theoretical_comparison}, we provide a theoretical comparison of approximation accuracy between odds ratios and the corresponding effect measures of complementary log-log models within this framework. Section \ref{discussion} presents concluding remarks. 
All technical proofs are provided in the Supplementary Materials.

\section{Setup} \label{setup}
\subsection{Binary Effect Measures in GLM literature}
Let \(A\) be a binary exposure and \(Y\) be a binary outcome of interest.
Define \(p_1 = P(Y = 1 | A = 1)\) and \(p_0 = P(Y = 1|A = 0)\) denoting the probability of having the outcome when exposed and unexposed, respectively.
Given specific values of \(p_1\) and \(p_0\), the risk ratio \(\mathrm{RR}\) is defined by \(\mathrm{RR} = p_1/p_0\).
Additionally, we define the odds ratio \(\mathrm{OR}\) and complementary log ratio \(\mathrm{CLR}\) for \(p_1\) and \(p_0\) respectively as 
\begin{align}
\mathrm{OR} = \frac{p_1/(1-p_1)}{p_0/(1-p_0)},\quad \mathrm{CLR} = \frac{\log(1-p_1)}{\log(1-p_0)}. \label{ORHR}  
\end{align}
Analogous to the odds ratio, we term the corresponding effect measure from complementary log-log models the ``complementary log ratio".
The definition of complementary log ratios in \eqref{ORHR} expresses the effect of exposure as a power function \citep{agresti2010analysis, agresti2012categorical}. 
Additionally, the same expression as complementary log ratios has appeared as an alternative expression of hazard ratios in some literature \citep{agresti2010analysis, vanderweele2020optimal}.

When the outcome is rare, for example, both \(p_1\) and \(p_0\) are less than or equal to \(0.1\), it is not difficult to see \(\mathrm{OR} \approx \mathrm{RR}\) \citep{hosmer2013applied}.
On the other hand, for \(\mathrm{CLR}\), Maclaurin expansions of the numerator and the denominator give the approximation relationship \(\mathrm{CLR} \approx \mathrm{RR}\) with relatively rare outcomes \citep{vanderweele2020optimal}.
The above approximations hold for small outcome prevalence, but the values of \(\mathrm{OR}\) and \(\mathrm{CLR}\) significantly diverge from that of \(\mathrm{RR}\) when the outcome is common \citep{vanderweele2020optimal}.

\subsection{The family of Aranda-Ordaz transformations}

Because both the logistic and complementary log-log models belong to the Aranda-Ordaz family of transformations \citep{aranda1981two}, we introduce this parametric family as a unifying framework. 
This unified framework enables systematic comparisons between logistic and complementary log-log models, facilitating theoretical analyses of how closely each model can approximate risk ratios through varying a transformation parameter.

\cite{aranda1981two} considers the following family of transformations, which includes both logistic and complementary log-log models:
\begin{align*}
  W_{\lambda}(\theta) := 
  \begin{cases}
\{(1 - \theta)^{-\lambda} - 1\} /{\lambda}, & \text{when } 0<\lambda\leq 1, \\
-\log(1-\theta), & \text{when } \lambda = 0.
\end{cases}
\end{align*}
where $0<\theta<1$ denotes the probability of success and $0\leq \lambda\leq 1$ is the transformation parameter.
Note that the case of $\lambda=0$ is naturally defined in a mathematical sense.
More specifically, $\lim_{\lambda  \downarrow 0} W_{\lambda}(\theta)=W_{0}(\theta)$ holds.
This family $ W_{\lambda}(\theta)$ satisfies
\begin{align*}
  W_1(\theta) = \frac{\theta}{1 - \theta}, \quad W_{0}(\theta) = -\log(1 - \theta).
\end{align*}
When $\log \mathrm{W}_\lambda(\theta)$ is used as a link function for GLMs, the resulting models includes both the logistic model ($\lambda = 1$) and complementary log-log model ($\lambda = 0$) as special cases \citep{aranda1981two}.

Given that $\log W_{\lambda}(\theta)$ forms a parametric family of link functions containing the logit and complementary log-log links, it is natural to investigate how measures of association behave under the transformation  $\mathrm{W}_\lambda(\theta)$. 
In particular, we define a generalized ratio on the $W$-scale that unifies the odds ratio and complementary log ratio within a common framework.
Specifically, for $0 < p_0, p_1 < 1$, we define
\begin{align*}
  \mathrm{WR}(\lambda) := \frac{W_{\lambda}(p_1)}{W_{\lambda}(p_0)} .
\end{align*}
This transformation-based ratio $\mathrm{WR}(\lambda)$ generalizes the two measures of our interest: it coincides with the odds ratio when $\lambda = 1$ and the complementary log ratio when $\lambda  = 0$:
\begin{align*}
  \mathrm{WR}(1) = \frac{p_1/(1 - p_1)}{p_0/(1 - p_0)} = \mathrm{OR},\ \mathrm{WR}(0) &= \frac{\log(1 - p_1)}{\log(1 - p_0)} = \mathrm{CLR}.
\end{align*}
In practice, $\mathrm{WR}(\lambda)$ can be estimated by using \(\log \mathrm{W}_{\lambda}(\theta)\) as a link function for a GLM \citep{aranda1981two}.

Figure \ref{fig:RR_approx} shows the values of \(\mathrm{CLR}\), \(\mathrm{WR}(0.5)\) and \(\mathrm{OR}\) when fixing \(\mathrm{RR}\) to \(1.25\) or \(0.5\).
Note that \(\mathrm{WR}(\lambda)\) coincides with \(\mathrm{CLR}\) when \(\lambda = 0\), and with \(\mathrm{OR}\) when \(\lambda = 1\).
In both graphs, monotonic behaviors of \(\mathrm{WR}(\lambda)\) according to the increase in outcome prevalence are common for all \(\lambda\).
Additionally, we observe that when \(\mathrm{RR} > 1\) (or \(\mathrm{RR} < 1\)), \(\mathrm{WR}(\lambda)\) always overestimates (or underestimates) \(\mathrm{RR}\).
Moreover, the extent of such overestimation (or underestimation) decreases when the transformation parameter \(\lambda\) becomes small.
In the next section, we offer a mathematical justification for this monotonic behavior of \(\mathrm{WR}(\lambda)\) observed in Figure \ref{fig:RR_approx}.
\begin{figure}[bt]

  \begin{subfigure}[b]{0.49\textwidth} \includegraphics[width=\textwidth]{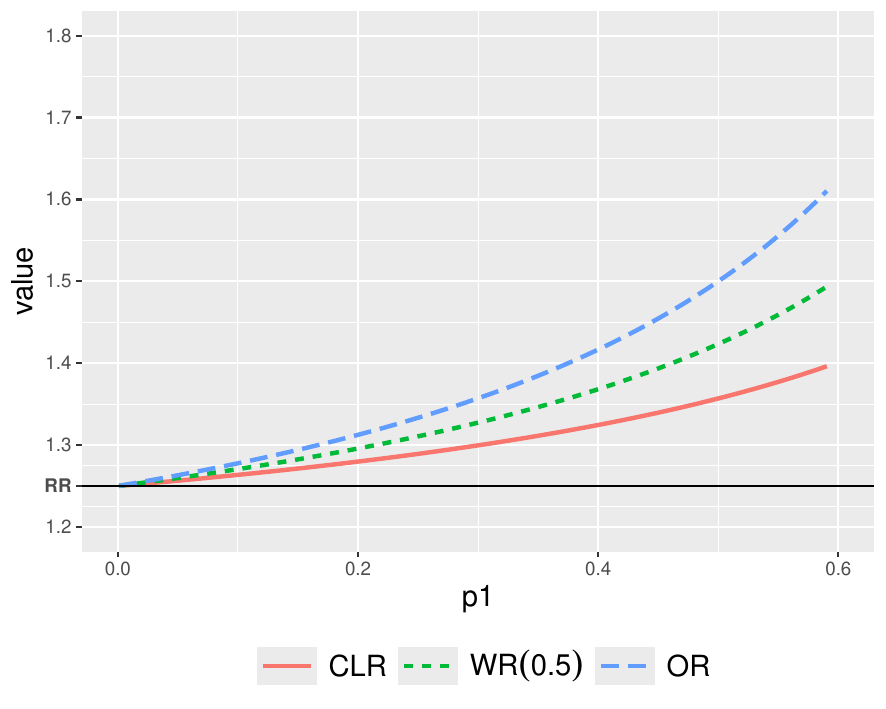}
  \caption{\(\mathrm{RR} = 1.25\)}
  \end{subfigure}
  \hfill
  \begin{subfigure}[b]{0.49\textwidth} \includegraphics[width=\textwidth]{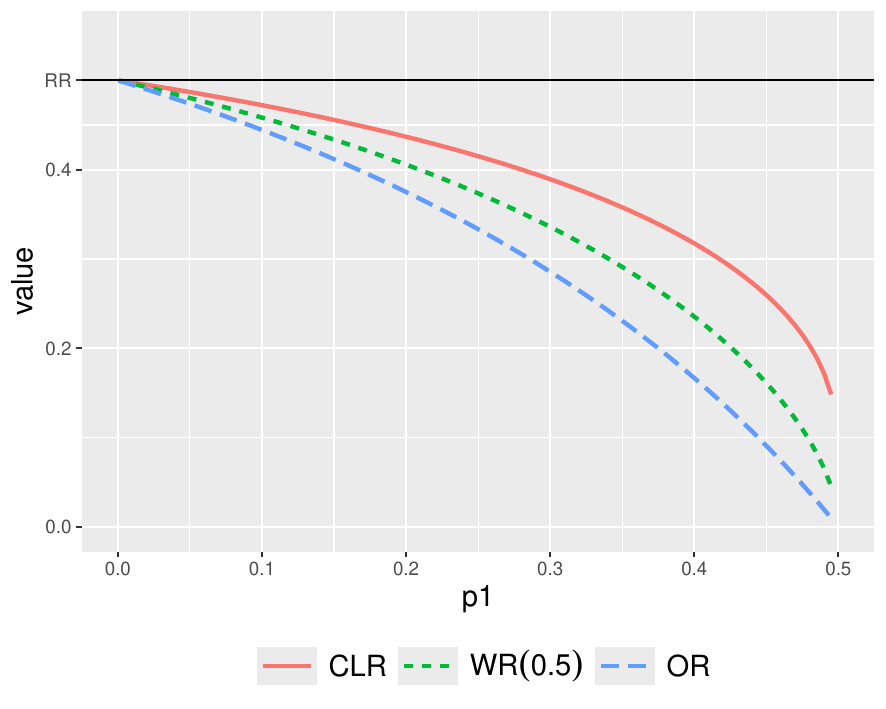}
  \caption{\(\mathrm{RR} = 0.5\)}
  \end{subfigure}
  \caption{Risk Ratio Approximation by Aranda-Ordaz Transformation Family}
  \label{fig:RR_approx}
\end{figure}

\section{Theoretical Comparison of Risk Ratio Approximations within Aranda-Ordaz transformation family}\label{Sec::theoretical_comparison}
Since both the logistic and complementary log-log models are special cases of the Aranda-Ordaz transformation family, we develop a unified theoretical framework for comparison using the transformation parameter $\lambda$.
This enables us to rigorously analyze how well each model approximates the risk ratio.
First, we consider the relative discrepancy between the risk ratio $\mathrm{RR}$ and the transformation-based ratio $\mathrm{WR}(\lambda)$.
Specifically, we define
\begin{align*} B(\lambda) := \max \left\{ \frac{\mathrm{RR}}{\mathrm{WR}(\lambda)}, \ \frac{\mathrm{WR}(\lambda)}{\mathrm{RR}} \right\} = \max \left\{ \frac{p_1/p_0}{W_{\lambda}(p_1)/W_{\lambda}(p_0)}, \ \frac{W_{\lambda}(p_1)/W_{\lambda}(p_0)}{p_1/p_0} \right\}.
\end{align*} 

By construction, $B(\lambda) \geq 1$ for all $\lambda \in [0,1]$, and the closer $B(\lambda)$ is to 1, the better $\mathrm{WR}(\lambda)$ approximates the risk ratios.
The lemma below describes the connection between $B(\lambda)$ and risk ratios.

\begin{lemma} \label{lemma_cha_B}
    Under $\mathrm{RR}>1$, which is equivalent to $p_0<p_1$, $\mathrm{RR}<\mathrm{WR}(\lambda)$ holds for all $\lambda\in[0,1]$, thus, we obtain $B(\lambda)=\mathrm{WR}(\lambda)/\mathrm{RR}$.
    Also, under $\mathrm{RR}<1$, which is equivalent to $p_1<p_0$, $\mathrm{WR}(\lambda)<\mathrm{RR}$ holds for all $\lambda\in[0,1]$, thus, we obtain $B(\lambda)=\mathrm{RR}/\mathrm{WR}(\lambda)$.
\end{lemma}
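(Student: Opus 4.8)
The plan is to collapse both halves of the lemma into one monotonicity statement. Since $p_0,p_1\in(0,1)$ we have $W_\lambda(p_0),W_\lambda(p_1)>0$, so $\mathrm{RR}<\mathrm{WR}(\lambda)$ is equivalent (after clearing denominators) to $W_\lambda(p_0)/p_0 < W_\lambda(p_1)/p_1$, and $\mathrm{WR}(\lambda)<\mathrm{RR}$ is equivalent to $W_\lambda(p_1)/p_1 < W_\lambda(p_0)/p_0$. Hence it suffices to show that $g_\lambda(\theta) := W_\lambda(\theta)/\theta$ is strictly increasing on $(0,1)$ for every $\lambda\in[0,1]$. Granting this, when $\mathrm{RR}>1$ we have $p_0<p_1$ (immediate from $\mathrm{RR}=p_1/p_0$), so $g_\lambda(p_0)<g_\lambda(p_1)$, i.e.\ $\mathrm{RR}<\mathrm{WR}(\lambda)$, and then the maximum defining $B(\lambda)$ is attained by $\mathrm{WR}(\lambda)/\mathrm{RR}$; the case $\mathrm{RR}<1$ (so $p_1<p_0$) is symmetric and yields $B(\lambda)=\mathrm{RR}/\mathrm{WR}(\lambda)$.

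To prove that $g_\lambda$ is strictly increasing I would use that $W_\lambda$ is strictly convex on $[0,1)$ and satisfies $W_\lambda(0)=0$. Indeed $W_\lambda'(\theta)=(1-\theta)^{-\lambda-1}$ and $W_\lambda''(\theta)=(\lambda+1)(1-\theta)^{-\lambda-2}>0$ for $0<\lambda\le 1$, while $W_0''(\theta)=(1-\theta)^{-2}>0$; and $W_\lambda(0)=0$ by direct substitution in both branches. For any strictly convex $f$ with $f(0)=0$ the difference quotient $f(\theta)/\theta$ is strictly increasing: writing $g_\lambda(\theta)=\frac{1}{\theta}\int_0^\theta W_\lambda'(s)\,ds$ as the average of the strictly increasing integrand $W_\lambda'$, one gets $g_\lambda'(\theta)=\theta^{-2}\int_0^\theta\big(W_\lambda'(\theta)-W_\lambda'(s)\big)\,ds>0$. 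This is essentially the only computational content of the argument.

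I do not anticipate a genuine obstacle; the one place to be mildly careful is treating $\lambda=0$ uniformly with $0<\lambda\le1$, since $W_0$ is given by the logarithmic branch. This is harmless because the identities $W_\lambda'(\theta)=(1-\theta)^{-\lambda-1}$ and $W_\lambda(0)=0$ both hold at $\lambda=0$, so the same convexity computation covers all $\lambda\in[0,1]$ at once. The remaining elementary facts — positivity of $W_\lambda$ on $(0,1)$ and the equivalences $\mathrm{RR}>1\Leftrightarrow p_0<p_1$ and $\mathrm{RR}<1\Leftrightarrow p_1<p_0$ — are immediate and will simply be stated.
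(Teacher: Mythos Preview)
Your proposal is correct and is essentially the same argument as the paper's: both exploit that $W_\lambda$ is strictly convex with $W_\lambda(0)=0$, so that the secant slope $W_\lambda(\theta)/\theta$ is strictly increasing, which is exactly the inequality $\mathrm{RR}<\mathrm{WR}(\lambda)$ when $p_0<p_1$. The only cosmetic differences are that the paper invokes continuity of $\mathrm{WR}(\lambda)$ at $\lambda=0$ rather than treating that case directly, and states the secant-slope inequality rather than computing $g_\lambda'$ via the integral representation.
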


Lemma \ref{lemma_cha_B} indicates that, under  $\mathrm{RR}>1$ (or $\mathrm{RR}<1$), $\mathrm{WR}(\lambda)$ always overestimates (or underestimates) the risk ratios. 
Since $\mathrm{WR}(\lambda)$ with $\lambda=1$ corresponds to the odds ratios, this result generalizes the well-known fact that the odds ratios always overestimate (or underestimate) the risk ratios under $\mathrm{RR}>1$ (or $\mathrm{RR}<1$) \citep{zhang1998s}.

Using lemma \ref{lemma_cha_B}, the following theorem characterizes the monotonic behavior of $B(\lambda)$:

\begin{theorem} \label{main_theorem}
Fix any $0 < p_0 \neq p_1 < 1$. Then the function $B(\lambda)$ is strictly increasing over the interval $0 \leq  \lambda \leq 1$. Furthermore, if $p_0 = p_1$, then $B(\lambda) = 1$ for all $\lambda$, i.e., $B(\lambda)$ is constant. 
\end{theorem}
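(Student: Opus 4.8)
The plan is to reduce the theorem to a single‑variable monotonicity statement about $\mathrm{WR}(\lambda)$ and then prove that statement by a logarithmic–derivative computation. By Lemma~\ref{lemma_cha_B}, if $p_0 < p_1$ then $B(\lambda) = \mathrm{WR}(\lambda)/\mathrm{RR}$ with $\mathrm{RR}$ a positive constant not depending on $\lambda$, so it suffices to show that $\mathrm{WR}(\lambda)$ is strictly increasing on $[0,1]$. If instead $p_1 < p_0$, I would observe that relabelling the two exposure arms sends $(p_0,p_1)$ to $(p_1,p_0)$, replaces $\mathrm{RR}$ by $1/\mathrm{RR}$ and $\mathrm{WR}(\lambda)$ by $1/\mathrm{WR}(\lambda)$, and hence leaves $B(\lambda)$ unchanged; so that case follows from the case $p_0<p_1$ by symmetry. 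Finally, if $p_0 = p_1$ then $W_\lambda(p_1) = W_\lambda(p_0)$ gives $\mathrm{WR}(\lambda)\equiv 1$ and $\mathrm{RR}=1$, so $B(\lambda)\equiv 1$. Thus I may assume $p_0<p_1$.

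The next step is a change of variables. Setting $u_i := -\log(1-p_i) > 0$ for $i = 0,1$, we have $(1-p_i)^{-\lambda} = e^{\lambda u_i}$, so that
\[
  \mathrm{WR}(\lambda) = \frac{e^{\lambda u_1} - 1}{e^{\lambda u_0} - 1}\quad (0 < \lambda \le 1), \qquad \mathrm{WR}(0) = \frac{u_1}{u_0},
\]
and $p_0 < p_1$ is equivalent to $u_1 > u_0 > 0$. Since $\lim_{\lambda\downarrow 0} W_\lambda(p_i) = W_0(p_i)$, the map $\lambda\mapsto\mathrm{WR}(\lambda)$ is continuous on $[0,1]$; hence it is enough to prove strict monotonicity on $(0,1]$, and strict monotonicity up to the endpoint $\lambda=0$ then follows from continuity.

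For $\lambda \in (0,1]$ I would differentiate the logarithm and simplify using $\frac{u e^{\lambda u}}{e^{\lambda u}-1} = \frac{u}{1-e^{-\lambda u}}$:
\[
  \frac{d}{d\lambda}\log \mathrm{WR}(\lambda) = \frac{u_1 e^{\lambda u_1}}{e^{\lambda u_1}-1} - \frac{u_0 e^{\lambda u_0}}{e^{\lambda u_0}-1} = \frac{1}{\lambda}\bigl[\phi(\lambda u_1) - \phi(\lambda u_0)\bigr], \qquad \phi(t) := \frac{t}{1-e^{-t}}.
\]
It then suffices to show $\phi$ is strictly increasing on $(0,\infty)$, because $\lambda u_1 > \lambda u_0 > 0$ would then force the bracket, and hence the derivative, to be positive. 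A direct computation gives $\phi'(t) = N(t)/(1-e^{-t})^2$ with $N(t) = 1-(1+t)e^{-t}$, and since $N(0) = 0$ and $N'(t) = t e^{-t} > 0$ for $t>0$, we get $N(t) > 0$ and therefore $\phi'(t) > 0$ on $(0,\infty)$. This completes the argument.

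The main obstacle I anticipate is not any single estimate but finding the right reformulation: the raw quotients $W_\lambda(p_1)/W_\lambda(p_0)$ do not obviously vary monotonically in $\lambda$, and the crux is (i) passing to the exponential form through $u_i = -\log(1-p_i)$ and (ii) recognizing that the logarithmic derivative collapses into a difference of values of the single function $\phi(t) = t/(1-e^{-t})$, whose monotonicity is elementary. A secondary point requiring care is the boundary $\lambda = 0$, where the definition of $W_\lambda$ changes form; this is dispatched by the continuity of $\mathrm{WR}(\lambda)$ noted above, and by first invoking Lemma~\ref{lemma_cha_B} one never has to track which branch of the maximum in $B(\lambda)$ is active.
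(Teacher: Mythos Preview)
Your proof is correct and follows essentially the same route as the paper: reduce via Lemma~\ref{lemma_cha_B} to the monotonicity of $\mathrm{WR}(\lambda)$, substitute $u_i=-\log(1-p_i)$ to write $\mathrm{WR}(\lambda)=(e^{\lambda u_1}-1)/(e^{\lambda u_0}-1)$, and compute the logarithmic derivative as a difference of values of the strictly increasing function $t\mapsto t/(1-e^{-t})=te^t/(e^t-1)$. You are in fact a bit more careful than the paper about the endpoint $\lambda=0$, the symmetry reduction for $p_1<p_0$, and the trivial case $p_0=p_1$.
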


The monotonicity of $B(\lambda)$ established in Theorem \ref{main_theorem} implies that, for fixed values of $p_0$ and $p_1$, models with smaller values of $\lambda$ produce better approximations to the risk ratios.
The following corollary compares the approximation accuracy of the risk ratios under the logistic and complementary log-log models, both of which are included in the Aranda-Ordaz family of link functions. It follows directly from Theorem \ref{main_theorem} by evaluating the result at $\lambda = 0$ and $\lambda = 1$.

\begin{corollary} \label{corollary_CLR}
Fix any $0 < p_0 \neq p_1 < 1$. Then the following inequality holds:
\begin{align*}
\max \left\{ \frac{p_1/p_0}{\mathrm{{CLR}}}, \ \frac{\mathrm{{CLR}}}{p_1/p_0} \right\} 
< 
\max \left\{ \frac{p_1/p_0}{\mathrm{OR}}, \ \frac{\mathrm{OR}}{p_1/p_0} \right\},
\end{align*}
where $\mathrm{{CLR}}$ and $\mathrm{OR}$ are defined as in \eqref{ORHR}.
\end{corollary}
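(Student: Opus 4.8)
The plan is to recognize Corollary \ref{corollary_CLR} as the special case of Theorem \ref{main_theorem} obtained by comparing the two endpoints $\lambda = 0$ and $\lambda = 1$. First I would unwind the definitions from Section \ref{setup}: since $\mathrm{RR} = p_1/p_0$, $\mathrm{WR}(0) = \log(1-p_1)/\log(1-p_0) = \mathrm{CLR}$, and $\mathrm{WR}(1) = [p_1/(1-p_1)]/[p_0/(1-p_0)] = \mathrm{OR}$, the quantity $B(\lambda)$ evaluated at $\lambda = 0$ is exactly the left-hand side of the claimed inequality, and $B(1)$ is exactly the right-hand side. Thus the corollary is literally the statement $B(0) < B(1)$, once the max-expressions are written in the same form as the definition $B(\lambda) = \max\{\mathrm{RR}/\mathrm{WR}(\lambda),\, \mathrm{WR}(\lambda)/\mathrm{RR}\}$.

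Next I would invoke Theorem \ref{main_theorem}: for fixed $0 < p_0 \neq p_1 < 1$, the map $\lambda \mapsto B(\lambda)$ is strictly increasing on $[0,1]$. Since $0 < 1$ and both endpoints lie in $[0,1]$, strict monotonicity immediately yields $B(0) < B(1)$, which is the desired conclusion. The only points worth checking explicitly are that both $\lambda = 0$ and $\lambda = 1$ lie in the interval on which $\mathrm{WR}(\lambda)$, and hence $B(\lambda)$, is defined, and that $W_0$ is the genuine limit of $W_\lambda$ as $\lambda \downarrow 0$ (already noted in Section \ref{setup}), so that no boundary pathology interferes with applying the theorem at the left endpoint.

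In this argument there is essentially no obstacle: all of the analytic content — establishing the strict monotonicity of $B(\lambda)$ over $[0,1]$ — has been absorbed into Theorem \ref{main_theorem}. The proof of the corollary is therefore a one-line deduction once the notational identifications $\mathrm{WR}(0) = \mathrm{CLR}$ and $\mathrm{WR}(1) = \mathrm{OR}$ are made, and I expect to present it in exactly that form, referring the reader to Theorem \ref{main_theorem} for the substantive step.
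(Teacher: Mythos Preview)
Your proposal is correct and matches the paper's own approach exactly: the paper states that Corollary \ref{corollary_CLR} follows directly from Theorem \ref{main_theorem} by evaluating at $\lambda = 0$ and $\lambda = 1$, which is precisely the identification $B(0) = \max\{\mathrm{RR}/\mathrm{CLR},\,\mathrm{CLR}/\mathrm{RR}\}$ and $B(1) = \max\{\mathrm{RR}/\mathrm{OR},\,\mathrm{OR}/\mathrm{RR}\}$ combined with the strict monotonicity of $B(\lambda)$.
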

Corollary \ref{corollary_CLR} formally establishes that, for any values of $p_0$ and $p_1$ in the unit interval, the CLR consistently provides a more accurate approximation to the RR than the OR, in terms of maximum relative discrepancy.
This result offers a theoretical basis for preferring CLR over OR when the objective is to approximate RR.
Moreover, Theorem \ref{main_theorem} implies that the complementary log-log model, corresponding to $\lambda = 0$, achieves the smallest approximation error $B(\lambda)$ within the Aranda-Ordaz transformation family.

\section{Discussion} \label{discussion}

This study revisited the issue concerning risk ratio approximation in binary outcome analyses and highlighted the potential advantages of using complementary log-log models within the Aranda-Ordaz transformation framework.
We theoretically compared odds ratios from logistic models with complementary log ratios from complementary log-log models.
Our results in Section \ref{Sec::theoretical_comparison} established that the complementary log ratios consistently yield a closer approximation to risk ratios than odds ratios.

In contrast to various methods that estimate risk ratios directly, the complementary log-log model serves as a notably simple alternative. Importantly, it can be implemented using standard statistical software such as R or SAS.
For example, in R, users can obtain estimates from complementary log-log models by simply specifying \texttt{family = binomial(link = "cloglog")} in the \texttt{glm()} function.
The accessibility of complementary log-log models makes it particularly attractive in applied settings.
Since the complementary log-log model is also a standard generalized linear model \citep{agresti2012categorical}, researchers and practitioners already familiar with the logistic model can easily adopt this alternative without additional computational burden and necessity to learn new theoretical concepts. 
Our findings thus promote complementary log-log regression analyses as a practical substitute for logistic regression analyses when producing better risk ratio approximation is desired.

It should be emphasized that this research does not take a stance against the direct estimation of risk ratios.
On the contrary, when such approaches are computationally feasible and methodologically appropriate, direct estimation represents a natural and ideal strategy.
In light of the computational and theoretical burdens that may arise in directly estimating risk ratios, a method that provides more robust approximation while remaining simple to implement in practice may offer considerable benefits for applied researchers.

\clearpage
\bibliographystyle{chicago} 
\bibliography{ref}


\newpage

\section*{eAppendix A.1: Technical proofs}

In this section, we provide the technical proofs of Lemma \ref{lemma_cha_B} and Theorem \ref{main_theorem}.

\begin{proof}[Proof of Lemma \ref{lemma_cha_B}]

Due to the continuity of $\mathrm{WR}(\lambda)$ at $\lambda = 0$, it is sufficient to prove the case of $0 < \lambda \leq 1$.
Without loss of generality, we consider the case where $0 < p_0 < p_1 < 1$.
By differentiating $W_{\lambda}(\theta)$ twice with respect to $\theta$, we get 
\begin{align*}
  \frac{\partial^2}{\partial \theta^2} W_{\lambda}(\theta) = \lambda (1 - \theta)^{-\lambda} > 0,
\end{align*}
which implies that $W_{\lambda}(\theta)$ is convex in $\theta$. Then, by the convexity of $W_{\lambda}$, it follows that for any $0 < p_0 < p_1$,
\begin{align*}
  \frac{W_{\lambda}(p_0) - W_{\lambda}(0)}{p_0 - 0} \leq \frac{W_{\lambda}(p_1) - W_{\lambda}(0)}{p_1 - 0}.
\end{align*}
Noting that $W_{\lambda}(0) = 0$ and simplifying, we obtain
\begin{align*}
  \frac{p_1}{p_0} \leq \frac{W_{\lambda}(p_1)}{W_{\lambda}(p_0)}.
\end{align*}
Therefore, when $0 < p_0 < p_1 < 1$, we have
\begin{align}
B(\lambda) := \max \left\{ \frac{p_1/p_0}{W_{\lambda}(p_1)/W_{\lambda}(p_0)}, \ \frac{W_{\lambda}(p_1)/W_{\lambda}(p_0)}{p_1/p_0} \right\} = \frac{W_{\lambda}(p_1)/W_{\lambda}(p_0)}{p_1/p_0} = \frac{\mathrm{WR}(\lambda)}{p_1/p_0}. \label{key_ineq}
\end{align}
\end{proof}

\begin{proof}[Proof of Theorem \ref{main_theorem}]

Similar to the proof of Lemma \ref{lemma_cha_B}, it is sufficient to prove the case of $0 < \lambda \leq 1$ and $0 < p_0 < p_1 < 1$.
The monotonicity of $B(\lambda)$ can be derived from that of $\mathrm{WR}(\lambda)$ from the equation \eqref{key_ineq}.
Hence, it suffices to prove that $\mathrm{WR}(\lambda)$ is monotonic.

We define $a := -\ln(1 - p_0),\  b := -\ln(1 - p_1),$ so that $0<a < b$.
By the properties of the exponential function, we have
$(1 - p_0)^{-\lambda} = e^{\lambda a}, \ (1 - p_1)^{-\lambda} = e^{\lambda b}$.
Thus, the expression for $\mathrm{WR}(\lambda)$ can be rewritten as
\[
\mathrm{WR}(\lambda) = \frac{e^{\lambda b} - 1}{e^{\lambda a} - 1}.
\]
Taking the logarithm of both sides, we obtain
\[
\ln \mathrm{WR}(\lambda) = \ln\bigl(e^{\lambda b} - 1\bigr) - \ln\bigl(e^{\lambda a} - 1\bigr).
\]
Differentiating with respect to $\lambda$ yields
\[
\frac{d}{d\lambda} \ln \mathrm{WR}(\lambda)
= \frac{b\,e^{\lambda b}}{e^{\lambda b} - 1} - \frac{a\,e^{\lambda a}}{e^{\lambda a} - 1}.
\]
Therefore, a sufficient condition for $\ln \mathrm{WR}(\lambda)$ to be strictly increasing is
\begin{align}
 \frac{b\,e^{\lambda b}}{e^{\lambda b} - 1} > \frac{a\,e^{\lambda a}}{e^{\lambda a} - 1}.  \label{sufficient_cond}    
\end{align}
Now, we define the function
\[
h(x) := \frac{x\,e^x}{e^x - 1}, \quad \text{for } x > 0,
\]
and its derivative is given by
\[
h'(x) = \frac{e^{2x} - e^x(x + 1)}{(e^x - 1)^2}
= \frac{e^x \bigl(e^x - (x + 1)\bigr)}{(e^x - 1)^2}.
\]
Since $e^x > x + 1$ for all $x > 0$, it follows that $e^x - (x + 1) > 0$, and hence $h'(x) > 0$. That is, $h(x)$ is strictly increasing for $x > 0$.
Since $\lambda a < \lambda b$, we get
\[
h(\lambda a) = \frac{\lambda a\,e^{\lambda a}}{e^{\lambda a} - 1}
< \frac{\lambda b\,e^{\lambda b}}{e^{\lambda b} - 1} = h(\lambda b).
\]
Therefore, the above equation is equivalent to the equation \eqref{sufficient_cond} which is the sufficient condition for the monotonicity of $\mathrm{WR}(\lambda)$.
By the equation \eqref{key_ineq}, we prove the monotonicity of $B(\lambda)$.

\end{proof}

\end{document}